\let\phi\varphi
\let\epsilon\varepsilon
\let\rho\varrho
\let\tilde\widetilde
\newtheorem{lemma}{Lemma}
\newtheorem{theorem}[lemma]{Theorem}
\newtheorem{corollary}[lemma]{Corollary}
\theoremstyle{nonumberplain}
\newtheorem{proof}{Proof}
\title{An Optimal Algorithm for the Indirect Covering Subtree Problem}
\author{Joachim Spoerhase}
\newcommand{\RR}{\mathbb{R}}
\newcommand{\QQ}{\mathbb{Q}}
\newcommand{\NN}{\mathbb{N}}
\begin{document}

\maketitle

\begin{abstract}
  We consider the indirect covering subtree problem (Kim et al.,
  1996).  The input is an edge weighted tree graph along with
  customers located at the nodes.  Each customer is associated with a
  radius and a penalty.  The goal is to locate a tree-shaped facility
  such that the sum of setup and penalty cost is minimized.  The setup
  cost equals the sum of edge lengths taken by the facility and the
  penalty cost is the sum of penalties of all customers whose distance
  to the facility exceeds their radius.  The indirect covering subtree
  problem generalizes the single maximum coverage location problem on
  trees where the facility is a node rather than a subtree.  Indirect
  covering subtree can be solved in $O(n\log^2 n)$ time (Kim et al.,
  1996).  A slightly faster algorithm for single maximum coverage
  location with a running time of $O(n\log^2n/\log\log n)$ has been
  provided (Spoerhase and Wirth, 2009).  We achieve time $O(n\log n)$
  for indirect covering subtree thereby providing the fastest known
  algorithm for both problems.  Our result implies also faster
  algorithms for competitive location problems such as
  $(1,X)$-medianoid and $(1,p)$-centroid on trees.  We complement our
  result by a lower bound of $\Omega(n\log n)$ for single maximum
  coverage location and $(1,X)$-medianoid on a real-number RAM model
  showing that our algorithm is optimal in running time.

  \bigskip
  \noindent\emph{Keywords: Graph algorithm, coverage, medianoid, tree,
    efficient algorithm}
\end{abstract}

\section{Problem Definitions and Related Work}

We are given a tree $T=(V,E)$ along with edge costs $c\colon
E\rightarrow\RR_{\geq 0}$ inducing a distance function $d\colon
V\times V\rightarrow\RR_{\geq 0}$.  With each node $u$ we associate a
non-negative \emph{penalty} $\pi(u)$.  Let $Y$ be a subtree of $T$.
Then $c(Y)$ denotes the \emph{setup cost} of $Y$ and is given by the
sum $\sum_{e\in E(Y)}c(e)$ of its edge costs.  A node is \emph{covered
  directly by $Y$} if it lies in $Y$.  If some node $u$ is not covered
directly it imposes the penalty $\pi(u)$ on $Y$.  The \emph{direct
  covering subtree problem} \cite{kim-etal:location-tree-shaped} asks
for a subtree $Y$ such that the sum of setup cost $c(Y)$ and the total
penalty $\sum_{u\notin V(Y)}\pi(u)$ is minimized.

The \emph{indirect covering subtree problem} goes one step further.  A
node $u$ is said to be covered (indirectly) if it lies within a given
distance from $Y$.  Again, a penalty is imposed on $Y$ if it does not
cover $u$. More formally, we assign to each node $u$ some
\emph{radius} $\rho(u)$.  The \emph{penalty imposed on $Y$ by $u$} is
given as
\begin{displaymath}
  p(u,Y):=
  \begin{cases}
    0 & \text{ if } d(u,Y)\leq \rho(u)\\
    \pi(u) & \text{ otherwise\,.}
  \end{cases}
\end{displaymath}
If $U\subseteq V$ is a set of nodes then $p(U,Y):=\sum_{u\in U}
p(u,Y)$ is the penalty imposed on~$Y$ by~$U$.  The total penalty
imposed on $Y$ is given by $p(Y):=p(V,Y)$.  The indirect covering
subtree problem \cite{kim-etal:location-tree-shaped} asks for a
subtree $Y$ of $T$ such that the total cost $c(Y)+p(Y)$, given by the
sum of setup and penalty cost, is minimum among all subtrees of $T$.

If we require that $Y$ be a node rather than a subtree we obtain the
\emph{single maximum coverage location problem}
\cite{megiddo:maximum-coverage-location,spoerhase+wirth:single-max-coverage}.
It is not hard to see that single maximum coverage location is a
special case of indirect covering subtree.  (Scale all edge lengths
and radii with a sufficiently large factor while leaving the penalties
unchanged.)

\subsection{Related Work and Previous Results}

The \emph{multiple} maximum coverage location problem allows the
placement of a an arbitrary set of $r$ nodes.  On general graphs this
problem is NP-hard \cite{megiddo:maximum-coverage-location} while it
can be solved in time $O(rn^2)$ on trees \cite{Tamir:p-median-tree}.
This leads to an $O(n^2)$ algorithm for the \emph{single} maximum
coverage location problem on trees by setting $r=1$.  Kim et al.\
\cite{kim-etal:location-tree-shaped} provide a faster algorithm
running in $O(n\log^2n)$.  Their algorithm works even for the more
general indirect covering subtree problem.  Recently a slightly faster
algorithm for single maximum coverage location with time
$O(n\log^2n/\log\log n)$ has been reported
\cite{spoerhase+wirth:single-max-coverage}. Finally, we remark that
\emph{direct} covering subtree can be solved in linear time
\cite{kim-etal:location-tree-shaped}.

\subsection{Contribution and Outline of this Paper}

In this paper we show that indirect covering subtree can be solved in
$O(n\log n)$.  This improves upon the previously best algorithms for
this problem and single maximum coverage location on trees.  Our
result also implies faster algorithms for the $(1,X)$-medianoid
problem and the $(1,p)$-centroid problem on trees.  Specifically, we
obtain an $O(n\log n)$ algorithm for $(1,X)$-medianoid and $O(n^2\log
n\log w(T))$ and $O(n^2\log n\log w(T)\log D)$ algorithms for the
discrete and absolute $(1,p)$-centroid problems on trees,
respectively.  Here, $w(T)$ denotes the total weight of the tree and
$D$ is the maximum edge length.  The previously best algorithms are
slower by factor of $O(\log n/\log\log n)$
\cite{spoerhase+wirth:rp-centroid}.

Our algorithm employs the same dynamic programming framework used by
Kim et al.\ \cite{kim-etal:location-tree-shaped}.  However, we improve
one of their core routines by using a more sophisticated technique to
subdivide trees.  This technique, called two-terminal subtree
subdivision (TTST), is a simplification of the recursive coarsening
strategy \cite{spoerhase+wirth:single-max-coverage} used for solving
single maximum coverage location on a tree.  The key source of our
improvement is that we manage to avoid explicitly sorting the nodes
according to their distances and radii during the recursion, which has
been necessary in the coarsening approach and also in the original
algorithm of Kim et al.  One further advantage of our algorithm is
that it is a lot simpler than the recursive coarsening algorithm.

The two-terminal subtree technique has proved successful also for
other location problems
\cite{spoerhase+wirth:relaxed-mgf-dam,spoerhase+wirth:jda-stackelberg}.
I believe that there are further problem classes where it can be
applied.

The paper is organized as follows.  In
Section~\ref{sec:algorithm-kim-et} we briefly outline the algorithm of
Kim et al.  This is necessary, since our result relies on an
improvement of a subroutine of that algorithm.  The improved routine
is then described in Section~\ref{sec:an-ovlog-v}.  In Section
\ref{sec:matching-lower-bound} we provide a matching lower bound on
the running time needed to solve indirect covering subtree.  Finally,
we discuss implications on related problems such as competitive
location problems in Section~\ref{sec:impl-relat-probl}.

\section{The Algorithm of Kim et al.}\label{sec:algorithm-kim-et}

In the sequel we will briefly describe the algorithmic approach of Kim
et al.\ \cite{kim-etal:location-tree-shaped} for solving the indirect
covering subtree problem.

Let's first fix some conventions and notations.  We assume that the
input tree $T$ is rooted at some distinguished node $s$.  For
technical reasons we shall adopt the convention that $s$ is the father
of itself.  Let $v$ be an arbitrary node.  Then $f(v)$ denotes the
father of $v$.  We write $T_v$ for the subtree of $T$ hanging from $v$
and $T_v^+$ for the union of $T_v$ with the edge $(v,f(v))$.

Kim et al.\ reduce the solution of the problem to the computation of
the values $p(v), p(T_v,v)$ and $p(T_v,f(v))$ for all nodes $v$.  They
show that one can determine an optimum to the subtree location problem
in linear time once these values have been precomputed for all nodes
$v$.

To convince ourselves, assume that we have computed the values $p(v)$,
$p(T_v,v)$ and $p(T_v,f(v))$ for all $v\in V$.  Then define
\begin{equation*}
  C(v):=\min\{\,c(Y)+p(T_v,Y)\mid Y\text{ is subtree of } T_v\text{ containing
  } v\,\}\, ,
\end{equation*}
and
\begin{equation*}
  C^+(v):=\min\{\,c(Y)+p(T_v,Y)\mid Y\textnormal{ is subtree of }
  T_v^+ \text{ containing } f(v)\,\}\,.
\end{equation*}

It is not hard to see that the optimum cost can now be expressed by
\begin{equation*}
  \min_{v\in V} (C(v)+p(v)-p(T_v,v)\,).
\end{equation*}
Moreover, the $C(\cdot)$- and $C^+(\cdot)$-values can be computed in
linear time by means of a simple bottom-up dynamic programming
approach.  To this end assume that $v$ is a leaf of $T$ then
\begin{equation*}
C(v)=0 \text{\quad and\quad } C^+(v)=\min\{p(v,f(v)), c(v,f(v))\}\, .
\end{equation*}
Otherwise, we have
\begin{equation*}
  C(v)=\sum_{u\text{ is son of } v} C^+(u)\, ,
\end{equation*}
and
\begin{equation*}
    C^+(v)=\min\{C(v)+c(v,f(v)), p(T_v,f(v))\}\, .
\end{equation*}

From this equations it follows that an optimal solution can be
determined in linear time in a bottom-up fashion once the values
$p(v), p(T_v,v)$ and $p(T_v,f(v))$ have been computed for all $v\in
V$.  Kim et al.\ suggest an algorithm with running time $O(n\log^2 n)$
to compute these values. 

This algorithm is based on the so-called \emph{bitree model}.  In this
model, each (undirected) edge $(u,v)$ of the input tree is replaced
with two anti-parallel, directed arcs $(u,v)$, $(v,u)$.  We call the
resulting tree $T'$ \emph{bitree} of $T$.  With each arc $(u,v)$ of
the bitree we associate a cost $c_{T'}(u,v)$ representing the length
of this arc.  But in contrast to the edges of the input tree $T$ we
allow these costs to be negative and asymmetric.  This induces a
distance function $d_{T'}\colon V\times V\rightarrow\QQ$ where
$d_{T'}(u,v)$ is the length of the unique $u$-$v$-path in $T'$.  Now
we define the penalty cost $p'(u,v)$ imposed on $v$ by $u$ to be zero
if $d_{T'}(u,v)\leq\rho(u)$ and $\pi(u)$ otherwise.  We set
$p'(v)=\sum_{u\in V}p'(u,v)$.

The algorithm of Kim et al.\ is based on a subroutine for efficiently
computing $p'(v)$ for all nodes $v$ on a given bitree $T'$.  By means
of such a subroutine it is then possible to calculate the values
$p(v)$, $p(T_v,v)$ and $p(T_v,f(v))$ for all $v$ in the input tree
$T$.  It follows from the above discussion that the knowledge of these
values enables us to identify an optimal tree-shaped facility.

It remains to explain how we can employ such a subroutine to determine
$p(v)$, $p(T_v,v)$ and $p(T_v,f(v))$ for all nodes $v$ of the input
tree $T$ which, in turn, is sufficient to build an optimal tree-shaped
facility.

First we describe how we can determine $p(\cdot)$.  For this purpose
we simply set $c_{T'}(u,v):=c_{T'}(v,u):=c_T(u,v)$ for all edges
$(u,v)$ of the input tree $T$.  It is then immediately clear that
$p(v)=p'(v)$ for all $v\in V$.

In order to compute $p(T_v,v)$ for all $v\in V$ we set
$c_{T'}(u,f(u)):=c_T(u,f(u))$ and $c_{T'}(f(u),u):=-\infty$ for all
$u\neq s$.  This construction ensures that the penalty cost $p'(u,v)$
is always zero if $u$ is \emph{not} a descendant of $v$.  Thus
$p(T_v,v)=p'(v)$ holds for this construction.

Finally, we wish to determine $p(T_v,f(v))$.  To this end we introduce
on each edge $(v,f(v))$ of $T$ a new node $f'(v)$ such that edge
$(v,f'(v))$ has length $c_T(v,f(v))$ and edge $(f'(v),f(v))$ has
length zero.  This increases the number of nodes to $2n-1$.  We set
$\pi(f'(v))$ and $\rho(f'(v))$ to zero.  It is easy to see that
$p(T_v,f(v))$ in the original tree equals $p(T_u,u)$ in the newly
constructed tree where $u:=f'(v)$.  Hence the problem of computing
$p(T_v,f(v))$ for all nodes $v$ can be reduced to the problem of
computing $p(T_v,v)$, which has been described before.

Kim et al.\ provide a subroutine to compute the $p'(\cdot)$-values on
a bitree with $n'$ nodes in $O(n'\log ^2n')$ time which yields
immediately.

\begin{theorem}[\cite{kim-etal:location-tree-shaped}]
  The indirect covering subtree problem on a tree can be solved in
  $O(n\log^2n)$.\qed{}
\end{theorem}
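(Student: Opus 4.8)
The plan is to assemble the pieces already developed in this section into a single running-time bound. The only genuinely algorithmic ingredient is the bitree subroutine of Kim et al., which computes all $p'(\cdot)$-values on a bitree of $n'$ nodes in $O(n'\log^2 n')$ time; I would treat this as a black box and simply count how many times it is invoked and on inputs of what size.

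First I would recall that, by the bottom-up dynamic program for $C(\cdot)$ and $C^+(\cdot)$ given above, an optimal subtree can be extracted in $O(n)$ time once the three families of values $p(v)$, $p(T_v,v)$ and $p(T_v,f(v))$ are available for every $v$. Thus it suffices to bound the cost of producing these values, and the whole argument reduces to accounting for three applications of the subroutine plus one linear-time postprocessing pass.

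Next I would invoke the three bitree constructions in turn. For $p(v)$ I set the arc costs symmetrically equal to the original edge costs, obtaining a bitree on $n$ nodes and a single call costing $O(n\log^2 n)$. For $p(T_v,v)$ I orient the costs asymmetrically, assigning $-\infty$ to every upward arc $(f(u),u)$; this again uses $n$ nodes and one call of cost $O(n\log^2 n)$. For $p(T_v,f(v))$ I first subdivide each tree edge by an auxiliary node $f'(v)$ of zero penalty and radius, which raises the node count only to $2n-1$, and then reduce to the previous $p(T_u,u)$ computation with $u=f'(v)$; since $2n-1=O(n)$, this call also costs $O(n\log^2 n)$.

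Summing the three $O(n\log^2 n)$ invocations together with the $O(n)$ dynamic program gives the claimed $O(n\log^2 n)$ total. The only real obstacle is establishing correctness of the bitree subroutine itself, i.e.\ that $p'(v)$ truly equals the intended quantity under each cost assignment, but this is precisely the content supplied by Kim et al.\ and assumed here; the remaining work is pure bookkeeping, with the edge-subdivision step for $p(T_v,f(v))$ being the one place where I would explicitly verify that the blow-up in the node count stays linear so that the per-call bound is preserved.
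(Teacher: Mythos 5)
Your proposal is correct and follows the paper's own argument essentially verbatim: the linear-time bottom-up dynamic program for $C(\cdot)$ and $C^+(\cdot)$, combined with three invocations of the $O(n'\log^2 n')$ bitree subroutine of Kim et al.\ (symmetric arc costs for $p(v)$, the $-\infty$-cost construction for $p(T_v,v)$, and the $(2n-1)$-node edge-subdivision reduction to $p(T_u,u)$ with $u=f'(v)$ for $p(T_v,f(v))$). The only nit is terminological: the arcs $(f(u),u)$ receiving cost $-\infty$ are the father-to-child (downward) arcs rather than \emph{upward} ones, but since you specify them by the correct notation the construction you describe is exactly the paper's.
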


\section{An \emph{O}(\emph{n}$\;$log \emph{n}) Algorithm}\label{sec:an-ovlog-v}

In this section we describe an algorithm for the indirect covering
subtree problem with running time $O(n\log n)$.

Our algorithm uses the algorithmic framework of Kim et al.\ described
in Section~\ref{sec:algorithm-kim-et}.  Specifically, we will provide
an improved routine for computing the values $p'(\cdot)$ on a given
bitree in $O(n\log n)$ which can then be extended to an algorithm with
the same asymptotic running time for solving indirect covering
subtree.

The basic approach of the routine of Kim et al.\ for computing
$p'(\cdot)$ is divide-and-conquer.  It partitions the node set $V$
into two sets $V_1, V_2$ of bounded size such that both induce
subtrees and have exactly one node (called centroid
\cite{kang-ault:properties-centroid-tree}) in common.  Then it sorts
the sets $V_i$ and computes, by means of a clever merge-and-scan
procedure, for all $v\in V_i$ the penalties $p'(v,V_j)$ of the users
in $V_j$ where $j\neq i$.  Applying the routine recursively to the
sub-bitrees induced by $V_1,V_2$ one can determine the
$p'(v,V_i)$-values also for each $v\in V_i$.  Finally, one obtains the
total penalty $p'(v,V)$ of any node $v\in V$ by adding $p'(v,V_1)$ and
$p'(v,V_2)$.

Our routine proceeds in a similar way but uses a more sophisticated
subdivision, which allows us to avoid the explicit sorting thereby
supressing the additional $\log$-factor.  Spoerhase and Wirth
\cite{spoerhase+wirth:relaxed-mgf-dam,spoerhase+wirth:jda-stackelberg}
used an analogous subdivision technique for solving competitive
location problems on undirected trees.

Consider the (undirected) input tree $T=(V,E)$. We may assume that $T$
has maximum degree three.  Otherwise, we can split nodes of larger
degree by introducing suitable zero-length edges and zero-weighted
nodes.  Let $T'$ be the bitree corresponding to $T$.

If $s$ and $t$ are distinct nodes then $T'_{st}$ denotes the maximal
sub-bitree of $T'$ having $s$ and $t$ as leaves. Let $V_{st}$ be the
node set of $T'_{st}$.  We call $s$ and $t$ \emph{terminals} and
$T'_{st}$ \emph{two-terminal sub-bitree} (TTSB).

Our algorithm divides the input bitree recursively into TTSBs.  Since
we are dealing with a degree-bounded bitree we can subdivide any TTSB
$S$ into at most five TTSBs, called \emph{child TTSBs}.  Each of these
child TTSBs has at most $\frac12|S|+1$ nodes.

\begin{lemma}\label{lem:subdiv-degree-ttsb}
  Let $S$ be a TTSB with maximum degree three.  Then $S$ can be
  partitioned into at most five edge-disjoint TTSBs each of which
  having at most $\frac12|S|+1$ nodes.  This subdivision can be
  computed in $O(|S|)$ time.
\end{lemma}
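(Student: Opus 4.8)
The plan is to split $S$ at a single, carefully chosen node---a \emph{centroid} of $S$---and to argue that each of the resulting branches, with the centroid reattached, is again a two-terminal sub-bitree (the cases $|S|\le 2$ being trivial). First I would record the structure of a TTSB $S=T'_{st}$: its \emph{spine} is the unique $s$--$t$ path $P$, the terminals $s,t$ are leaves of $S$, and since $S$ has maximum degree three every internal spine node carries at most one hanging branch. The key observation I would establish is a recognition criterion: for any two distinct nodes $a,b\in S$ the maximal sub-bitree of $S$ having $a$ and $b$ as leaves is exactly the $a$--$b$ path together with all of $S$ that hangs off its internal nodes, so that nothing lying beyond $a$ or beyond $b$ is included. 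This gives a convenient way to certify that a piece of $S$ is a TTSB.

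Next I would invoke the standard fact that a tree on $|S|$ nodes possesses a centroid $c$, a node whose removal leaves connected components each of at most $\frac12|S|$ nodes, and that $c$ can be found in $O(|S|)$ time by one subtree-size computation. Because $S$ has maximum degree three, $c$ has at most three incident edges, so $S-c$ splits into components $S_1,\dots,S_\delta$ with $\delta\le 3$. To each component I attach $c$ through its unique connecting edge, forming $S_i\cup\{c\}$; I then pick any leaf $\ell_i$ of this piece other than $c$ and claim $S_i\cup\{c\}=T'_{c\ell_i}$. Indeed $c$ and $\ell_i$ are leaves of the piece, the $c$--$\ell_i$ path runs inside $S_i\cup\{c\}$, every remaining node of $S_i$ hangs off this path, and---because $c$ is an \emph{endpoint} of the path---none of the other components is reached; by the recognition criterion the piece is precisely the two-terminal sub-bitree with terminals $c$ and $\ell_i$. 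Thus $S$ is covered by at most three, hence at most five, TTSBs, each of size at most $\frac12|S|+1$, the extra unit coming from the shared centroid $c$.

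Finally I would check that this is genuinely an edge-partition: every edge of $S$ not incident to $c$ lies in exactly one component $S_i$, while each edge incident to $c$ is assigned to the unique piece $S_i\cup\{c\}$ containing its other endpoint, so the pieces are edge-disjoint and cover $S$; the node $c$ is the only shared vertex, which is permitted. The centroid search, the extraction of the components, and the selection of the secondary terminals $\ell_i$ all run in $O(|S|)$ time. I expect the main obstacle to be the recognition step: one must argue carefully that $S_i\cup\{c\}$ coincides with $T'_{c\ell_i}$ and does not silently swallow a neighbouring component or stop short of some hanging branch. This hinges on $c$ being a leaf (endpoint) of the chosen path, which is exactly what severs the sibling components, together with the maximality in the definition of $T'_{c\ell_i}$, which guarantees that the whole of $S_i$ is absorbed.
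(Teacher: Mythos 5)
Your decomposition is balanced, edge-disjoint, and computable in linear time, but the pieces it produces are not TTSBs in the sense the lemma (and the surrounding recursion) requires, because you discard the terminals $s,t$ of $S$. A TTSB is by definition a \emph{maximal} sub-bitree of $T'$ having its two terminals as leaves, and deeper in the recursion the terminals of $S=T'_{st}$ are typically former cut nodes, i.e.\ internal nodes of the full bitree $T'$. If the component $S_1$ of $S-c$ contains the old terminal $s$ and you declare the terminals of $S_1\cup\{c\}$ to be $c$ and an \emph{arbitrary} leaf $\ell_1\neq s$, then $s$ hangs off the path $P(c,\ell_1)$ strictly inside the piece, and maximality forces $T'_{c\ell_1}$ to absorb everything of $T'$ hanging beyond $s$ --- precisely the ``silent swallowing'' you worried about, but occurring at the inherited terminal, not at $c$, where your endpoint argument does sever correctly. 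Your recognition criterion speaks of maximal sub-bitrees \emph{of $S$}, which silently relativizes the definition; under the paper's definition the identity $S_1\cup\{c\}=T'_{c\ell_1}$ is false in general. Even granting the relativized reading, the lemma exists to serve the algorithm, which assembles $L_{d,s}(S)$ and $L_{\rho,s}(S)$ by shifting the children's keys by terminal-to-terminal distances and mediates penalties between sibling pieces through ``the terminals closest to each other''; this requires every gateway of a piece, and in particular the inherited terminals $s,t$, to be terminals of the pieces containing them. With your arbitrary $\ell_1$, for nodes $v$ in the piece containing $s$ one has $d_{T'}(s,v)\neq d_{T'}(s,c)+d_{T'}(c,v)$ in general, so the parent's sorted lists cannot be obtained in linear time from the children's.

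The defect is not repaired by simply choosing $\ell_i:=s$ whenever $s\in S_i$: if the centroid $c$ does not lie on the spine $P(s,t)$, then a single component of $S-c$ contains \emph{both} $s$ and $t$, so that piece has three unavoidable special points ($s$, $t$, and the attachment node towards $c$) and cannot be one two-terminal piece. This is exactly why the paper's proof distinguishes two cases and, when the median $m$ is off the spine, performs a \emph{second} cut at the spine node $m'$ nearest to $m$: the up to five components of $S-\{m,m'\}$ then all have their attachment points among $\{s,t,m,m'\}$, every gateway being a terminal, while the size bound $\frac12|S|+1$ survives because each piece is a component of $S-m$ (or a subset of one) with at most one cut node re-attached. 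Your single-cut, three-piece construction omits this case entirely; the bound of five in the statement is not slack but the footprint of the missing second cut.
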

\begin{proof}
  Let $S$ be a TTSB with maximum degree three and terminals $u$ and
  $v$.  Let $m$ be the unweighted median of $S$, which can be computed
  in $O(|S|)$ by means of Goldman's algorithm
  \cite{goldman:optimal-center-location}.  (All node and arc weights
  are temporarily set to one throughout this proof.)  It is a
  well-known fact that $m$ has the following property: Each of the
  connected components of $S-m$ has at most $\frac12|S|$ nodes.
  Hence, if $m$ lies on path $P(u,v)$ then $S-m$ contains at most
  three components that form the desired subdivision (confer left part
  of Figure~\ref{fig:tree-subdivision}).  If $m$ does not lie on
  $P(u,v)$ then consider the node $m'$ on $P(u,v)$ that is closest to
  $m$ (confer right part of Figure~\ref{fig:tree-subdivision}).  Then
  $S-\{m,m'\}$ has at most five connected components.  All of the
  child TTSBs obtained this way have clearly at most $\frac12|S|+1$
  nodes.
\end{proof}
\begin{figure}[htp]
  \centering
  \input{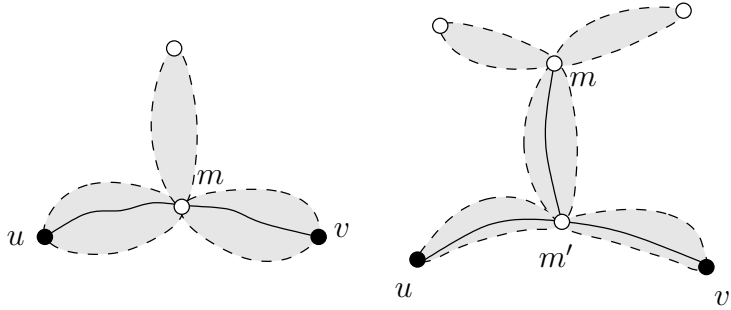}
  \caption{The two cases in the subdivision of a TTSB.}
  \label{fig:tree-subdivision}
\end{figure}

Consider a TTSB $T'_{st}$.  We introduce the lists $L_{d,s}(T'_{st})$
and $L_{\rho,s}(T'_{st})$.  Both lists contain all nodes $v$ of
$T'_{st}$ sorted in increasing order with respect to the values
$d_{T'}(s,v)$ and $\rho(v)-d_{T'}(v,s)$, respectively.  The lists
$L_{d,t}(T'_{st})$ and $L_{\rho,t}(T'_{st})$ are defined symmetrically.

The algorithm computes $p'(v,T'_{st})$ for all $v\in T'_{st}$ as well
as the four lists $L_{d,s}(T'_{st})$, $L_{d,t}(T'_{st})$,
$L_{\rho,s}(T'_{st})$ and $ L_{\rho,t}(T'_{st})$ for any TTSB
$T'_{st}$ occurring during the recursion.  We shall see that these
information can be propagated inductively from child towards parent
TTSBs such that we will have computed $p'(\cdot, T')=p'(\cdot)$ at the
top of the recursion.

To this end consider an arbitrary TTSB $S=T'_{st}$ being subdivided
into at most five child TTSBs $S_i$ with terminals $s_i$, $t_i$.
Moreover assume that we have already computed $p'(\cdot)$ and the four
lists corresponding to $S_i$ for all $S_i$.

We start with computing $L_{d,s}(S)$.  To this end we maintain a list
$L$ which is initialized with an empty list.  Now we perform the
following operations for all child TTSBs $S_i$: Assume that $s_i$ is
the terminal of $S_i$ closest to $s$.  Then the list $L_{d,s_i}(S_i)$
contains all nodes $v\in S_i$ with associated sorting keys
$d_{T'}(s_i,v)$.  Now we create a copy $L'$ of this list and add the
value $d_{T'}(s,s_i)$ to all sorting keys which does not affect its
order.  As a result $L'$ contains all nodes $v$ of $S_i$ sorted with
respect to their distance $d_{T'}(s,v)$ from terminal $s$.  Finally we
merge $L$ with $L'$.  After having carried this out for all child
TTSBs $S_i$ the list $L$ equals the list $L_{d,s}(S)$ we are looking
for.  The list $L_{\rho,s}(S)$ is computed very similarly with the
difference that we \emph{subtract} the value $d_{T'}(s_i,s)$ from the
sorting keys when handling the list $L_{\rho,s}(S_i)$.  The respective
lists for terminal $t$ are computed symmetrically.  The total running
time for computing the four lists associated with $S$ is $O(|S|)$
since we handle a constant number of child TTSBs.

We are now going to explain how $p'(v,S)$ can be determined for all
$v\in S$.  To this end assume that $v$ is contained in some $S_i$.
Since we already know $p'(v,S_i)$ by the inductive hypothesis it
suffices to determine $p'(v,S_j)$ for all $S_j\neq S_i$ and to add
these values to $p'(v,S_i)$.  Consider an arbitrary $S_j\neq S_i$ and
assume that $s_i,s_j$ are the terminals of these TTSBs closest to each
other.  We create a copy $L'$ of list $L_{\rho,s_j}(S_j)$ and
substract the distance $d_{T'}(s_j,s_i)$ from all sorting keys in this
list.  As a result $L'$ contains all nodes $u$ of $S_j$ sorted with
respect to the key $\rho(u)-d_{T'}(u,s_i)$.  At this point we can
compute $p'(v,S_j)$ for \emph{all} $v\in S_i$ by using the
merge-and-scan procedure of Kim et al.  To this end we merge the
sorted list $L'$ with the sorted list $L_{d,s_i}(S_i)$ and store the
result in $L'$.  We assume that the nodes in $L'$ are sorted in
increasing order with respect to their numerical sorting keys.  Ties
are broken in favor of nodes in $S_i$.  This can be achieved in linear
time $O(|S_i|+|S_j|)$.  Now recall that a node $u\in S_j$ imposes a
penalty $\pi(u)$ on $v\in S_i$ if $d_{T'}(u,v)>\rho(u)$ or
equivalently $d_{T'}(s_i,v)>\rho(u)-d_{T'}(u,s_i)$.  This is
tantamount to that $u$ precedes $v$ in $L'$.  Hence, in order to
compute $p'(v,S_j)$ for \emph{all} $v\in S_i$ it suffices to traverse
$L'$ once.  In doing so, one can maintain the sum of penalties of all
nodes $u\in S_j$ encountered so far, which equals the penalty
$p'(v,S_j)$ whenever a node $v\in S_i$ is reached.

The running time of this merge-and-scan operation is $O(|S_i|+|S_j|)$
since the necessary sorted lists have already been computed.  Thus we
can compute $p'(v,S)$ for all $v\in S$ in total time $O(|S|)$ once we
know the $p'$-values and respective lists for all child TTSBs of~$S$.

Note that the bottom of the recursion, that is, when $T'_{st}$
consists merely of the pair $(s,t)$ and $(t,s)$ of anti-parallel arcs
can trivially be handled constant time.

To sum up, this leads us to an algorithm whose running time $h(|S|)$
can be described by the following recurrence
\begin{displaymath}
  h(|S|)=O(|S|)+\sum_{i=1}^{k} h(|S_i|)\, ,
\end{displaymath}
where $k\leq 5$, $\sum_{i=1}^k |S_i|=|S|+4$ and $|S_i|\leq\frac12
|S|+1$.  This implies that $h(n)$ is $O(n\log n)$.

\begin{theorem}\label{thm:disc-single-cov-tree}
  The indirect covering subtree problem and hence also the single
  maximum coverage location can be solved in time $O(n\log n)$. \qed{}
\end{theorem}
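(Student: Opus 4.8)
The plan is to assemble the theorem from the reduction of Kim et al.\ recalled in Section~\ref{sec:algorithm-kim-et} and the subroutine built above. Recall that an optimal subtree is read off in linear time from the values $p(v)$, $p(T_v,v)$ and $p(T_v,f(v))$ via the bottom-up recursion for $C(\cdot)$ and $C^+(\cdot)$, and that each of these three families of values was reduced to one evaluation of $p'(\cdot)$ on a suitably reweighted bitree of size $O(n)$ --- the construction for $p(T_v,f(v))$ enlarging the node count only to $2n-1$. Thus it suffices to compute $p'(\cdot)$ on a bitree in time $O(n\log n)$: the theorem then follows from a constant number of such calls followed by the linear-time dynamic program, and single maximum coverage location inherits the bound because it is a special case of indirect covering subtree.

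For the $p'(\cdot)$ routine I would recurse on two-terminal sub-bitrees. At every TTSB $S=T'_{st}$ I maintain the four sorted lists $L_{d,s}(S), L_{d,t}(S), L_{\rho,s}(S), L_{\rho,t}(S)$ together with the values $p'(v,S)$ for all $v\in S$, and I propagate this data from the at most five child TTSBs supplied by Lemma~\ref{lem:subdiv-degree-ttsb} up to their parent. Two observations keep each merge linear. First, shifting every sorting key of a child list by a common additive constant preserves its order, so $L_{d,s}(S)$ and $L_{\rho,s}(S)$ are assembled by merging shifted copies of the children's lists with no re-sorting. Second, in the merge-and-scan step the key $\rho(u)-d_{T'}(u,s_j)$ of a node $u\in S_j$ is shifted to $\rho(u)-d_{T'}(u,s_i)$ and interleaved with the distance keys $d_{T'}(s_i,v)$ of the nodes $v\in S_i$, so a single pass accumulates $p'(v,S_j)$ simultaneously for all $v\in S_i$.

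The step I expect to be the crux is justifying the additive distance decomposition on which both tricks rest. Since $T'$ is a tree, every path from a node $u\in S_j$ to a node $v\in S_i$ must leave $S_j$ and enter $S_i$ through the mutually closest terminals, so that $d_{T'}(u,v)=d_{T'}(u,s_j)+d_{T'}(s_j,s_i)+d_{T'}(s_i,v)$; rearranging this equality is exactly what converts the defining inequality $d_{T'}(u,v)>\rho(u)$ into the statement that $u$ precedes $v$ in the merged list. I would be careful to track the directed, possibly negative and asymmetric arc costs consistently, keeping $d_{T'}(u,s_i)$ distinct from $d_{T'}(s_i,u)$, and noting that the decomposition is a statement about concatenating paths and so never uses nonnegativity. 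A second point needing care is that the subdivision of Lemma~\ref{lem:subdiv-degree-ttsb} is only edge-disjoint, so the $O(1)$ terminal nodes shared by two child TTSBs must be charged to a single child when forming $p'(v,S)=p'(v,S_i)+\sum_{j\neq i}p'(v,S_j)$, lest a node's penalty be double-counted.

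It remains to discharge the running time, i.e.\ to solve $h(|S|)=O(|S|)+\sum_{i=1}^{k}h(|S_i|)$ with $k\le 5$, $\sum_i|S_i|=|S|+4$ and $|S_i|\le\frac12|S|+1$. The halving bound caps the recursion depth at $O(\log n)$. For the work at a fixed depth I would exploit the edge-disjointness from Lemma~\ref{lem:subdiv-degree-ttsb}: the children of any TTSB partition its edge set, so the TTSBs occurring at one recursion depth have pairwise disjoint edge sets and hence total size $O(n)$; only terminal nodes, not edges, are duplicated, which is precisely the source of the surplus $+4$. Summing $O(n)$ over $O(\log n)$ levels yields $h(n)=O(n\log n)$, and with the reduction above this establishes the $O(n\log n)$ bound for indirect covering subtree and for single maximum coverage location.
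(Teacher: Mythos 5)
Your proposal is correct and follows essentially the same route as the paper: the Kim et al.\ reduction to $p'(\cdot)$ computations on bitrees, the TTSB recursion of Lemma~\ref{lem:subdiv-degree-ttsb} with the four shifted sorted lists and the linear merge-and-scan, and the recurrence $h(|S|)=O(|S|)+\sum_i h(|S_i|)$ solved to $O(n\log n)$. You even make explicit two details the paper leaves implicit --- the path decomposition through the mutually closest terminals with directed (possibly negative) arc costs, and the need to charge each shared terminal node to a single child TTSB to avoid double-counting in $p'(v,S)$ --- both of which are handled correctly.
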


\section{A Matching Lower Bound}\label{sec:matching-lower-bound}

In this section we complement our algorithm with a lower bound
$\Omega(n\log n)$ on the running time for solving single maximum
coverage location on a tree.  This shows that (for certain
computational models) our algorithm is optimal.

We make use of a recent result which is summarized in the following
theorem.
\begin{theorem}[\cite{ben-amram+galil:RAM}]\label{thm:galil}
  Let $W\subseteq\RR^n$.  If $W$ is recognized in time $t(n)$ on a
  real-number RAM that supports direct assignments, memory access,
  flow control, and arithmetic instructions $\{+,-,\times,/\}$ then
  $t(n)=\Omega(\log\beta(W^\circ))$.  \qed{}
\end{theorem}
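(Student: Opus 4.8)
The plan is to prove this by the \emph{topological method} pioneered by Ben-Or for algebraic computation trees, adapted to the real-number RAM. The guiding principle is that a fast computation cannot carve the input space $\RR^n$ into too many pieces, whereas recognizing a set $W$ whose interior has many connected components forces at least that many pieces. Concretely, I would fix a RAM program that recognizes $W$ within $t(n)$ steps and partition $\RR^n$ according to the \emph{computation path}, that is, the sequence of branch outcomes (and, on a RAM, of resolved memory addresses) taken on a given input. For a fixed accepting path, the set of inputs that follow it is a semialgebraic set, and I would bound both the number of such accepting paths and the number of connected components contributed by each, finally summing to obtain an upper bound on $\beta(W^\circ)$ in terms of $t(n)$.

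The key structural steps are as follows. First I would track the algebraic complexity of the quantities held in the registers: since the only operations are $\{+,-,\times,/\}$, after $k$ steps every register content is a rational function of the inputs whose numerator and denominator have degree at most $2^{k}$. Each conditional branch therefore tests the sign of one such rational function, so the inputs consistent with a fixed path form a cell defined by at most $t(n)$ polynomial sign conditions of degree at most $2^{t(n)}$ in the $n$ input variables. By the Milnor--Thom / Oleinik--Petrovsky bounds the number of connected components of each such cell is singly exponential in the number and degree of these polynomials. Second, I would bound the number of distinct accepting paths. Multiplying the two bounds and observing that $W^\circ$ is covered by the full-dimensional accepting cells yields an estimate of the form $\beta(W^\circ)\le 2^{O(t(n))}$, whence $t(n)=\Omega(\log\beta(W^\circ))$.

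I expect two points to be the real obstacles. The first is \emph{indirect addressing}: unlike an algebraic computation tree, a RAM may use a computed value as a memory address, so the control flow and the algebraic expressions stored in each cell depend on the input in a data-dependent way. The remedy is to refine the notion of computation path so as to record, at each memory access, which address is resolved, and then to show that the inputs consistent with such a refined path still form a semialgebraic set of controlled complexity; this is precisely where the argument departs from the classical tree setting. The second obstacle is quantitative: a naive application of Milnor--Thom leaves a spurious factor depending on the dimension $n$ in the exponent, yielding only $t(n)=\Omega((\log\beta(W^\circ))/n)$. Removing this factor to reach the clean bound $\Omega(\log\beta(W^\circ))$ is the heart of the statement; I would handle it by working with the interior $W^\circ$, so that only full-dimensional accepting cells are counted, and by charging connected components to paths more carefully rather than through the crude product bound. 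These two refinements are exactly the contribution of Ben-Amram and Galil \cite{ben-amram+galil:RAM}, whose treatment I would follow for the details.
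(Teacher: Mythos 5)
The paper does not prove this theorem at all: it is quoted as a black box from Ben-Amram and Galil \cite{ben-amram+galil:RAM}, and its only role is to feed the counting argument of Section~\ref{sec:matching-lower-bound}, where the set $W_n$ with $\beta(W_n^\circ)\ge n!$ yields the $\Omega(n\log n)$ bound of Theorem~\ref{thm:matching-lower-bound}. So there is no internal proof to compare yours against; the honest benchmark is the cited source, and measured against that your outline reconstructs the correct strategy: stratify inputs by computation paths (refined, on a RAM, by the resolved addresses), control each path's input set as a semialgebraic cell via Milnor--Thom/Oleinik--Petrovsky, multiply by the number of paths, and observe that $W^\circ$ is covered by the accepting cells. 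You also correctly isolate the two points where this stops being Ben-Or's tree argument --- indirect addressing and the dimension-dependent loss --- and these are indeed exactly the contribution of \cite{ben-amram+galil:RAM}.

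As a proof, however, your text establishes nothing beyond the classical tree setting, because the two steps you flag as obstacles are precisely the ones you defer to the reference; this is a roadmap of the cited proof, not an independent argument, and it should be labeled as such. Two technical points deserve correction. First, the ``naive'' bound is better than you state: Ben-Or's standard device of introducing one fresh variable per arithmetic operation keeps every path constraint at degree at most $2$ in $n+t$ variables (avoiding your degree-$2^t$ estimate), and Milnor--Thom then gives $t(n)=\Omega(\log\beta(W))-O(n)$, an additive rather than your multiplicative loss in $n$; for this paper's application even the additive $O(n)$ would be harmless, since $\log\beta(W_n^\circ)\ge\log n!=\Theta(n\log n)$. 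What is genuinely indispensable here is the RAM part: the algorithm of Section~\ref{sec:an-ovlog-v} whose optimality is being certified runs on a RAM, not on an algebraic computation tree, so without the indirect-addressing analysis (bounding the number of refined paths and arguing that a resolved address is locally constant on a connected cell) the lower bound simply does not apply to it. Second, the passage to the interior $W^\circ$ is not merely a counting convenience for discarding non-full-dimensional cells, as your sketch suggests: for RAMs, bounds in terms of $\beta(W)$ are false in general, because indirect addressing can be exploited to recognize certain degenerate, lower-dimensional sets very quickly, so only interior components can legitimately be charged. Since the theorem is used here purely as an imported result, citing \cite{ben-amram+galil:RAM} for these steps is a defensible position, but your write-up should make explicit that the ``heart of the statement,'' as you yourself call it, is taken on faith.
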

Here, $W^\circ$ denotes the interior of $W$ and $\beta(W')$ denotes
the number of connected components of some set $W'\subseteq\RR^n$.

To prove our lower bound we introduce a variant of the set
disjointness problem.  To this end let $n\in \NN$.  The set
$W_n\subseteq \RR_+^{2n}$ contains all tuples
$(x_1,\ldots,x_n,y_1,\ldots,y_n)$ such that $x_1<\ldots<x_n$ and
$x_i\neq y_j$ for all pairs $i,j$.  Consider a permutation $\pi$ on
the set $\{1,\ldots,n\}$ and some tuple
$x_1<y_{\pi(1)}<x_2<y_{\pi(2)}<\ldots<x_n<y_{\pi(n)}$ in $W_n$.  It is
easy to see that for different permutations such tuples lie in
different connected components of $W_n^\circ$ so $W_n^\circ$ contains
at least $n!$ connected components.  Hence any RAM of the above
described type takes time $\Omega(n\log n)$ to recognize $W_n$.

We establish a linear time reduction from the problem to recognize
$W_n$ to the single maximum coverage location problem on a tree with
$O(n)$ nodes.  To this end consider a tuple
$(x_1,\ldots,x_n,y_1,\ldots,y_n)$ for which we want to decide whether
or not it is contained in $W_n$.

First we check if $x_1<\ldots<x_n$.  Then we create an edge $(u,v)$ of
some length $c(u,v)>\max\{\,x_i,y_i\mid i=1,\ldots,n\,\}$ and choose
some radius $\rho$ such that $\rho>c(u,v)$.  For any $y_i$ we create
two edges $(u,u_i)$ and $(v,v_i)$ of lengths $\rho-y_i$ and
$y_i+\rho-c(u,v)$, respectively.  Finally, we create for each $x_i$ a
node $\tilde x_i$ on edge $(u,v)$ with distance $d(u,\tilde
x_i):=x_i$.  For each node $z$ in the node set $V:=\{\,u_i,v_i,\tilde
x_i\mid i=1,\ldots,n\,\}\cup\{u,v\}$ we set $\pi(z):=1$ and
$\rho(z):=\rho$, which completes the reduction.

First suppose that we locate a facility outside the path $P(u,v)$.
Assume that the facility is located at some node $u_i$.  Then the
distance of $u_i$ to $u$ is positive and $d(u,v_j)\geq \rho$ for any
$j$.  Hence, none of the nodes $v_j$ is covered by $u_i$ and the
penalty cost imposed on $u_i$ must be at least $n$.  The case where is
the facility is placed at some node $v_i$ is treated analogously.

Now suppose for a moment that we can locate a facility anywhere at the
path $P(u,v)$, that is, also at interior points of edges on $P(u,v)$.
The point $x$ where the facility is located can then be identified
with the distance $d(u,x)$.  First, all nodes on $P(u,v)$ are covered
by $x$ since $\rho>d(u,v)$.  Due to our construction $x$ covers all
nodes $u_i$ where $x\leq y_i$ and all nodes $v_j$ where $x\geq y_j$.
Thus, the penalty imposed on $x$ is exactly $n$ if $x$ is not
contained in the set $\{y_1,\ldots,y_n\}$.  If $x=y_j$ then $x$ covers
both $u_j$ and $v_j$ and hence the penalty is bounded by $n-1$.  Since
the facility can only be placed at nodes $\tilde x_j$, that is, at
distances $x_i$ from $u$ we conclude that the minimum penalty cost is
$n$ if the input tuple $(x_1,\ldots,x_n,y_1,\ldots,y_n)$ lies in $W_n$
and $n-1$ otherwise.

\begin{theorem}\label{thm:matching-lower-bound}
  Any real-number RAM that complies with Theorem~\ref{thm:galil} takes
  at least $\Omega(n\log n)$ time to solve single maximum coverage
  location on a tree even for unit penalties and uniform radii. \qed{}
\end{theorem}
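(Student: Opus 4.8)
The plan is to assemble the two ingredients developed above into a single reduction argument. The first ingredient is an $\Omega(n\log n)$ lower bound for recognizing $W_n$; the second is the linear-time construction sending a tuple to a single-maximum-coverage instance whose optimum distinguishes membership in $W_n$.

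First I would pin down the component count. For each permutation $\pi$ of $\{1,\dots,n\}$ the interleaved tuple $x_1<y_{\pi(1)}<x_2<\cdots<x_n<y_{\pi(n)}$ lies in the open set $W_n^\circ$, and two such tuples coming from distinct permutations cannot be joined by a path inside $W_n^\circ$: changing the relative order of some $x_i$ and $y_j$ along a continuous path forces an equality $x_i=y_j$ at some point, which violates the defining condition of $W_n$. Thus the $n!$ interleavings sit in pairwise distinct connected components, so $\beta(W_n^\circ)\ge n!$, and Theorem~\ref{thm:galil} together with $\log n!=\Theta(n\log n)$ yields that recognizing $W_n$ costs $\Omega(n\log n)$ on any RAM of the stated kind.

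Next I would verify that the reduction transfers this bound within the permitted model. The instance has $O(n)$ nodes, unit penalties, and the single radius $\rho$; each edge length arises from the inputs through a bounded number of operations drawn from $\{+,-,\times\}$, and the preliminary test $x_1<\cdots<x_n$ is $O(n)$ comparisons, which are legitimate flow-control branches. Hence the instance is produced in linear time without leaving the instruction set of Theorem~\ref{thm:galil}. By the case analysis already carried out the optimum penalty equals $n-1$ precisely when some $x_i$ coincides with some $y_j$ and equals $n$ otherwise; so, after the pre-check on the $x_i$, comparing the returned optimum against the threshold $n-\tfrac12$ is one further admissible branch that decides membership in $W_n$.

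Combining the pieces gives the theorem by contradiction: an algorithm solving single maximum coverage location in $o(n\log n)$ time, pre-composed with the reduction and followed by the threshold comparison, would be a RAM of the admissible type recognizing $W_n$ in $o(n\log n)$ time, contradicting the second paragraph. The step needing the most care is exactly this model-bookkeeping — confirming that the reduction, the sorting pre-check, and the final decision all stay within assignments, memory access, flow control, and $\{+,-,\times,/\}$ — since only then is the composed procedure itself covered by Theorem~\ref{thm:galil}. Because the construction uses unit penalties and a common radius throughout, the resulting $\Omega(n\log n)$ bound holds already in that restricted regime.
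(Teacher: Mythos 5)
Your proposal is correct and follows essentially the same route as the paper: the $n!$ interleaved tuples witnessing $\beta(W_n^\circ)\ge n!$ combined with Theorem~\ref{thm:galil}, followed by the same linear-time tree construction in which the optimum penalty is $n$ exactly when the tuple lies in $W_n$. One negligible imprecision: when some $x_i$ coincides with \emph{several} $y_j$ the optimum can drop below $n-1$ (the paper accordingly only claims the penalty is \emph{bounded by} $n-1$), but your threshold test against $n-\tfrac12$ is unaffected.
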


\section{Implications for Related Problems}\label{sec:impl-relat-probl}

The variant of single maximum coverage location where the facility can
be placed not only at the nodes but also at interior points of edges
is called the \emph{absolute} maximum coverage location problem.  Kim
et al.\ show \cite{kim-etal:location-tree-shaped} that a set of $O(n)$
critical points (that is a set of point which is guaranteed to to
contain an optimal point) for absolute single maximum coverage
location can be found in time $O(n\log n)$. We infer that also the
absolute variant can be solved in $O(n\log n)$ on a tree.

Another implication of our result leads us to the realm of
\emph{competitive location}.  Let a graph $G=(V,E)$ and $r,p\leq n$ be
given.  We assume that the graph is edge and node weighted.  Let
$X,Y\subseteq G$ be sets of nodes or interior points of edges.  Then
$w(Y\prec X)$ denotes the total weight $\sum\{w(u)\mid u\in V\text{
  and }d(u,Y)<d(u,X)\}$ of nodes that are closer to $Y$ than to $X$.
Given some point set $X$ the goal of the \emph{$(r,X)$-medianoid
  problem} \cite{Hakimi:competitive-environment} is to identify a set
$Y$ of $r$ points such that $w(Y\prec X)$ is maximized.  This maximum
weight is denoted by $w_r(X)$.  The goal of the \emph{$(r,p)$-centroid
  problem} is to find a $p$-element point set $X$ such that $w_r(X)$
is minimized.  

By setting $\rho(u):=d(v,X)-\epsilon$ (where $\epsilon$ is a suitably
small constant) and $\pi(u):=w(u)$ one can easily verify that
$(r,X)$-medianoid is a special case of the multiple maximum coverage
location problem with $r$ servers.  On general graphs the problem is
NP-hard \cite{Hakimi:competitive-environment}.  It can be solved
efficiently in $O(rn^2)$ on trees \cite{Tamir:p-median-tree}.  Our
result leads to an $O(n\log n)$ algorithm for the absolute and the
discrete version of $(1,X)$-medianoid on trees.
\begin{corollary}
  The discrete and the absolute $(1,X)$-medianoid problem can be
  solved in $O(n\log n)$ on trees. \qed{}
\end{corollary}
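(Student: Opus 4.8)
The plan is to exhibit a linear-time reduction from $(1,X)$-medianoid to single maximum coverage location and then invoke Theorem~\ref{thm:disc-single-cov-tree} together with the $O(n\log n)$ bound for the absolute variant discussed above. Given a medianoid instance with point set $X$ on a tree, I would first compute the nearest-point distances $d(u,X)=\min_{x\in X}d(u,x)$ for all nodes $u$. Since the underlying graph is a tree, these values can be obtained in linear time by a standard two-pass traversal. I then build a single maximum coverage instance on the same tree by setting $\pi(u):=w(u)$ and $\rho(u):=d(u,X)-\epsilon$ for a suitably small constant $\epsilon>0$, exactly as indicated in the preceding text.

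The correctness argument rests on matching the coverage condition with the medianoid condition. In the max coverage instance a facility $Y$ covers a node $u$ iff $d(u,Y)\le\rho(u)=d(u,X)-\epsilon$. For $\epsilon$ smaller than the minimal positive gap among the finitely many relevant distance values, this is equivalent to the strict inequality $d(u,Y)<d(u,X)$, i.e.\ to $u$ being strictly closer to $Y$ than to $X$. Hence the set of covered nodes is precisely $\{\,u : d(u,Y)<d(u,X)\,\}$, whose total weight is by definition $w(Y\prec X)$. The penalty cost of $Y$, namely $\sum_{u\text{ uncovered}}w(u)$, therefore equals $w(V)-w(Y\prec X)$, so minimizing the penalty is tantamount to maximizing $w(Y\prec X)$. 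Consequently an optimal facility for single maximum coverage is an optimal $(1,X)$-medianoid point, and vice versa.

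It remains to settle both versions and the running time. For the discrete problem the candidate positions are nodes, which coincides exactly with discrete single maximum coverage, solvable in $O(n\log n)$ by Theorem~\ref{thm:disc-single-cov-tree}. For the absolute problem the facility may also lie in the interior of edges; here I would appeal to the result of Kim et al.\ that a set of $O(n)$ critical points for absolute single maximum coverage can be computed in $O(n\log n)$, which reduces the absolute case to a discrete one at the same asymptotic cost. Since the reduction itself (the distance computation and the setting of $\pi$ and $\rho$) runs in linear time, the overall bound is $O(n\log n)$ in both cases.

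The main obstacle I anticipate is not the reduction but justifying the choice of $\epsilon$: I must guarantee that replacing the strict medianoid inequality by the non-strict coverage inequality does not alter which nodes are counted. This amounts to arguing that there is a positive lower bound on $|d(u,Y)-d(u,X)|$ over all nodes $u$ and all candidate positions $Y$ for which this quantity is nonzero, and that $\epsilon$ may be chosen below it. On a real-number RAM this is a routine perturbation argument; alternatively one may treat $\epsilon$ as an infinitesimal and break ties symbolically, which sidesteps computing the gap explicitly and incurs no additional cost.
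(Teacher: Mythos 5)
Your proposal is correct and follows essentially the same route as the paper: the identical reduction with $\pi(u):=w(u)$ and $\rho(u):=d(u,X)-\epsilon$ (which the paper gives in the text immediately preceding the corollary), the discrete case settled by Theorem~\ref{thm:disc-single-cov-tree}, and the absolute case by Kim et al.'s $O(n\log n)$ computation of $O(n)$ critical points. You merely make explicit what the paper leaves implicit, namely the linear-time computation of $d(\cdot,X)$ and the justification of the ``suitably small'' $\epsilon$ via a gap or symbolic tie-breaking argument.
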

It is not hard to extend the lower bound provided by
Theorem~\ref{thm:matching-lower-bound} to $(1,X)$-medianoid.  Since
the radii of the tree constructed in the reduction are uniform and
hence all equal some number $\rho$, we can furnish each node $z$ on
that tree with a pendant leaf $z'$ at a distance
$d(z,z')=\rho+\epsilon$.  The set $X$ contains exactly those pendant
leaves.  It is clear that for any node $y$ in this enhanced tree $T'$
the gain $w(y\prec X)$ equals exactly $w(T)-p(y)$ in the original tree
$T$. This implies that both instances lead to the same optimum. Thus
also the algorithm for $(1,X)$-medianoid is optimal in terms of the
running time.

Now let's turn our view to the $(r,p)$-centroid problem.  The problem
is known to be $\Sigma^{\text{p}}_2$-complete on general graphs
\cite{spoerhase+wirth:multiple-voting} and NP-hard even on path graphs
\cite{spoerhase+wirth:rp-centroid}.  However, both the absolute and
the discrete variant of $(1,p)$-centroid on trees can be solved in
polynomial time $O(n^2\log^2 n\log w(T)/\log\log n)$ and $O(n^2\log^2
n\log w(T)\log D/\log\log n)$, respectively
\cite{spoerhase+wirth:rp-centroid}.  Here $D$ is the maximum edge
length of the input tree $T$.  Those algorithms rely on $O(n\log
w(T))$ (resp. $O(n\log w(T)\log D)$) calls to a subroutine solving
$(1,X)$-medianoid on a tree.

The algorithm provided here allows us to solve $(1,X)$-medianoid in
$O(n\log n)$ which yields.
\begin{corollary}
  The discrete and the absolute $(1,p)$-centroid problem for trees can
  be solved in $O(n^2\log n\log w(T))$ and $O(n^2\log n\log w(T)\log
  D)$, respectively. \qed{}
\end{corollary}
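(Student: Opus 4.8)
The plan is to keep the parametric-search framework of Spoerhase and Wirth \cite{spoerhase+wirth:rp-centroid} for $(1,p)$-centroid on trees entirely intact and to replace only its innermost ingredient, the $(1,X)$-medianoid solver. As recalled in the discussion preceding this corollary, their algorithm for the discrete variant reduces a centroid computation to $O(n\log w(T))$ calls to a $(1,X)$-medianoid subroutine on the input tree, while the absolute variant uses $O(n\log w(T)\log D)$ such calls; all remaining bookkeeping is of lower order.

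The crucial point is that this number of subroutine calls is a structural property of the outer search: it is governed only by the combinatorial size of the tree and by the ranges $w(T)$ and $D$ over which the parametric searches run, and is in no way tied to the internal running time of a single medianoid evaluation. Hence I may substitute any correct $(1,X)$-medianoid routine without disturbing the call count. The instances produced during the search are precisely discrete (respectively absolute) $(1,X)$-medianoid instances on a tree, which is exactly what the preceding corollary solves, so the substitution is admissible.

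It then suffices to perform the arithmetic. By the preceding corollary, each discrete or absolute $(1,X)$-medianoid instance is solved in $O(n\log n)$ time. Multiplying by the respective numbers of calls yields
\begin{displaymath}
  O(n\log w(T))\cdot O(n\log n)=O(n^2\log n\log w(T))
\end{displaymath}
for the discrete problem and
\begin{displaymath}
  O(n\log w(T)\log D)\cdot O(n\log n)=O(n^2\log n\log w(T)\log D)
\end{displaymath}
for the absolute problem, which are exactly the claimed bounds.

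The only step that warrants genuine care is confirming that no hidden cost of the outer framework exceeds this budget and that the faster subroutine indeed applies to every instance arising in the reduction. The former follows because the bookkeeping outside the medianoid calls is linear per iteration and is therefore dominated, and the latter is immediate from the preceding corollary; I expect no new obstacle, since the entire improvement comes from transporting the $O(\log n/\log\log n)$-factor speedup of the medianoid routine through an otherwise unchanged algorithm.
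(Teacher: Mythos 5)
Your proposal is correct and is essentially identical to the paper's own argument: the paper likewise observes that the $(1,p)$-centroid algorithms of Spoerhase and Wirth make $O(n\log w(T))$ (resp.\ $O(n\log w(T)\log D)$) calls to a $(1,X)$-medianoid subroutine, and obtains the corollary by plugging in the new $O(n\log n)$ medianoid solver and multiplying. Your added care about the outer framework's bookkeeping being dominated is a reasonable sanity check but introduces nothing beyond what the paper takes for granted.
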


\section{Concluding Remarks}

We have provided an $O(n\log n)$ algorithm for solving the indirect
covering subtree problem which improves upon the previously best
algorithms for this problem and single maximum coverage location on
trees.  We have also shown that our algorithm is optimal for certain
unit-cost RAM models.  Our result leads also to an optimal algorithm
for $(1,X)$-medianoid and faster algorithms for $(1,p)$-centroid on
trees.

It would be interesting to identify larger problem classes of location
problems on trees where the the two-terminal subtree technique can be
applied.  It would also be worth investigating the existence of faster
algorithms on path graphs.

\end{document}